\tikzset{%
    highlight/.style={rectangle,rounded corners,fill=red!15,draw,fill opacity=0.3,thick,inner sep=0pt}
}
\tikzset{%
    highlight1/.style={rectangle,rounded corners,fill=blue!15,draw,fill opacity=0.3,thick,inner sep=0pt}
}
\theoremstyle{plain}
\newtheorem{thm}{Theorem}[section]
\newtheorem{cor}[thm]{Corollary}
\theoremstyle{definition}
\newtheorem{defn}[thm]{Definition}
\newtheorem{rem}[thm]{Remark}
\numberwithin{equation}{section}
\newcommand{\E}{\mathbb{E}}
\newcommand{\F}{{\mathbb F}}
\newcommand{\Tr}{{\rm Tr}}
\newcommand{\Hom}{{\rm Hom}}
\newcommand{\K}{{\mathbb K}}
\newcommand{\sF}{{\mathcal F}}
\newcommand{\sE}{{\mathcal E}}
\begin{document}

    \title[The Concatenated Structure of QA Codes] {The Concatenated Structure of Quasi-Abelian Codes}
    \maketitle

    \author
    { {\large \begin{center} Martino Borello$^{1}$, Cem G\"{u}neri$^{2}$, Elif Sa\c{c}\i kara$^{2}$, Patrick Sol\'{e}$^{1}$ \end{center} }
        %\thanks{This work was supported in part by ??.}}\\
        %\author{Anonymous submission to DSD}
        \vspace{0.3cm}
        \small
        \begin{center}
            $^{1}$  Universit\'e Paris 13, Sorbonne Paris Cit\'e, LAGA, CNRS, UMR 7539, Universit\'e Paris 8, F-93430, Villetaneuse, France\\
            borello@math.univ-paris13.fr, sole@math.univ-paris13.fr\\
            $^{2}$ Sabanc{\i} University, Faculty of Engineering and Natural Sciences, 34956 Istanbul, Turkey \\
            guneri@sabanciuniv.edu, elifsacikara@sabanciuniv.edu
        \end{center}
    }

    %\vspace{1cm}

    \abstract
    The decomposition of a quasi-abelian code into shorter linear codes over larger alphabets was given in (Jitman, Ling, (2015)), extending the analogous Chinese remainder decomposition of quasi-cyclic codes (Ling, Sol\'e, (2001)). We give a concatenated decomposition of quasi-abelian codes and show, as in the quasi-cyclic case, that the two decompositions are equivalent. The concatenated decomposition allows us to give a general minimum distance bound for quasi-abelian codes and to construct some optimal codes. Moreover, we show by examples that the minimum distance bound is sharp in some cases. In addition, examples of large strictly quasi-abelian codes of about a half rate are given. The concatenated structure also enables us to conclude that strictly quasi-abelian linear complementary dual codes over any finite field are asymptotically good. %In addition, we observe that additive abelian codes over $\mathbb{F}_4$ and $\mathbb{F}_8$ are asymptotically good by the link between quasi-abelian and additive abelian codes.
    \endabstract

    \vspace{0.3cm}

    \noindent \emph{Keywords:\/} Quasi-abelian codes, concatenated codes, linear complementary dual codes, optimal codes, additive abelian codes.

    \vspace{0.4cm}

    % -----------------------------------------------------------
    \maketitle
    % -----------------------------------------------------------

    \section{Introduction}
    The well-known family of quasi-cyclic (QC) codes contains examples of good codes (\cite{GHM, HKLL}) and it is also asymptotically good (\cite{LS1,W}). As shown by Ling and Sol\'{e} in \cite{ls}, a QC code over $\F_q$ (the finite field with $q$ elements, where $q$ is a prime power) can be decomposed into shorter linear codes over various extensions of $\F_q$, using the Chinese Remainder Theorem (so-called CRT decomposition). Moreover, QC codes can also be decomposed into concatenated codes as shown by Jensen (\cite{j}). It was observed in \cite{go} that these two decompositions are equivalent. More specifically, the CRT components (constituents) of a QC code and the outer codes in its concatenated structure are the same.

    The family of quasi-abelian (QA) codes has been introduced by Wasan (\cite{wasan}) in order to generalize the algebraic structure of QC codes. Jitman and Ling reconsidered QA codes (\cite{js}) giving, among other things, the decomposition of QA codes by extending the CRT decomposition for QC introduced by Ling-Sol\'e (see also \cite{dkl}). Let us note that a special class of QA codes is also studied in \cite{goz}, which is called by the authors multidimensional QC codes, or quasi $n$D cyclic codes. In the general case, the family of QA codes coincides with the family QC codes, but if one considers strictly QA codes, these form a proper subfamily of QC codes. Let us also note that QA codes are a special class of codes over a group algebra $\mathbb{F}_{q}[H]$ in $\mathbb{F}_q[G],$ where $H$ is a subgroup of a finite group $G$. In this work, we assume that $G$ is a finite abelian group and $(q,|H|)=1,$ namely $\mathbb{F}_{q}[H]$ is a semisimple algebra. The cases in which $G$ is nonabelian or the algebra is not semisimple are more difficult to be treated in full generality. For some results in this direction, the interested reader may refer to \cite{borello} and references therein.

    Here, we contribute to the structural understanding of QA codes by giving their concatenated decomposition. As in the QC case, we show that the decomposition given by Jitman-Ling and the concatenated decomposition are equivalent. However, there are several advantages of viewing a code in the concatenated form. Firstly, we can transfer the general minimum distance bound for concatenated codes to QA codes. This is, to the best of our knowledge, the first general minimum distance bound on QA codes. Moreover, using {\sc Magma} (\cite{Magma}), we obtain numerical results for the minimum distance of some QA codes, based on their concatenated structure.

    A $q$-ary linear code $\mathcal{C}$ is said to be linear complementary dual (LCD) if  $\mathcal{C} \cap \mathcal{C}^{\bot}=\{0\}$. In \cite{jpc}, Jitman \emph{\emph{et al.}} showed that binary QA LCD codes of index $3$ are asymptotically good by using the asymptotic goodness of binary self-dual QA codes of index $2$, which was proved in \cite{js}. We can show that the family of (strictly) QA LCD codes, over any finite field, are asymptotically good, using the concatenated structure obtained for QA codes.

    %Finally, we explain the relationship between QA codes and additive abelian codes. We refer the reader to \cite{cao,martinez,martinez2,martinez3} for the details of additive abelian codes. We give asymptotic results on additive abelian codes over $\mathbb{F}_{4}$ and $\mathbb{F}_{8}$ that come from binary one-generator QA codes of index $2$ and index $3.$

    The sections are arranged as follows. Introductory material are presented in Section~\ref{background}. The concatenated structure of QA codes is given in Section \ref{results}, which is followed by consequences on the minimum distance. Numerical results and asymptotic conclusions are presented in Section \ref{results2}. %In Section~\ref{additive abelian}, the link between QA codes and additive abelian codes is explained, and asymptotic results for additive abelian codes are obtained.

    \medskip

    \section{Preliminaries}\label{background}

    In this section, we give the background materials used in this work. We assume that the reader is familiar with the definition and general theory of linear codes (\cite{sloane}).

    \subsection{Asymptotics}

    Let $B_{q}(n,d)$ denote the maximum cardinality of a linear code $\mathcal{C},$ for given positive integers $n$ and $d,$ such that a $q$-ary linear code $\mathcal{C}$ with parameters $[n,k,d]$ exists. In general, we have $k:=\log_{q}|\mathcal{C}|,$ where $|\mathcal{C}|$ denotes the cardinality of $\mathcal{C} \subseteq \mathbb{F}_{q}^{n}.$ A linear code $\mathcal{C}$ is said to be \emph{optimal} if it contains exactly $B_{q}(n,d)$ elements.

    If we consider a family of $q$-ary linear codes $\mathcal{C}_{(n)}$ with parameters $[n,k_{n},d_{n}],$ then recall that the rate and the relative distance of the family is defined, respectively, as
    $$r:=\liminf_{n \rightarrow \infty}{k_{n}}/{n},$$
    $$\delta:=\liminf_{n \rightarrow \infty}{d_{n}}/{n}.$$

    A family of $q$-ary linear codes $\mathcal{C}_{(n)}$ is called \emph{asymptotically good} if its rate $r$ and relative distance $\delta$ are both nonzero.

    \subsection{Generalized Concatenated Codes}

    Throughout the paper, by the concatenated structure, we mean generalized concatenated codes (GCC) introduced by Block-Zyablov (\cite{bz}). Since our main results are based on this structure, we find it useful to remind the structure of GCC, following \cite{d}. The idea of this construction is to extend a simple concatenation to a general concatenation with more than one outer code, which are of the same length but defined over possibly different finite extensions of $\mathbb{F}_q$.

    \begin{defn}\label{gcc}
        For $i\in\{1,\ldots,s\},$ let $\mathcal{C}_i$'s be linear codes (called \emph{outer codes}) with parameters $[N,K_i,d(\mathcal{C}_i)]$ over $\mathbb{F}_{q^{k_i}},$ where $\mathbb{F}_{q^{k_i}}$ are extensions of degree $k_i$ of the finite field $\mathbb{F}_q.$ Consider the set $\mathcal{C}$ of $s\times N$- matrices defined as follows

        \begin{equation}\label{array}
        \mathcal{C}:=\left\{c=\left(\begin{array}{ccc} c^{1}_1 & \ldots & c^{1}_N \\\vdots  & \ldots & \vdots\\ c^{s}_1 & \ldots & c^{s}_N \end{array} \right): \ (c^{i}_1, \ldots , c^{i}_N)\in \mathcal{C}_i \ \mbox{for $1\leq i \leq s$} \right\}.
        \end{equation}
        For $k_1+\ldots+k_s\leq n,$ suppose that $\pi: \mathbb{F}_{q^{k_1}}\times \cdots \times \mathbb{F}_{q^{k_s}} \mapsto {\mathbb{F}_q}^n $ is an $\mathbb{F}_q$-linear injection whose image $A:={\rm im}(\pi)$ is a linear code (called an \emph{inner code}) of parameters $[n,{\sum_{i=1}^{s}k_i},d(A)].$ Then the set
        \begin{equation}\label{GCC}
        \pi(\mathcal{C})= \left\{\left( \pi(c_1),\dots,\pi(c_N) \right): c_j\text{'s are columns of } c\in \mathcal{C} \text{, for } j=1,\ldots, N \right\},
        \end{equation}
        is called a \emph{generalized concatenated code $($GCC$)$}.
    \end{defn}

    This concatenation of an inner code $A$ with an outer code $\mathcal{C}$ is denoted throughout by $A \square \mathcal{C}.$ Let us note that {\em simple concatenation} is obtained if we choose a GCC with only one outer code.

    With the following statement, we observe the relation between a GCC with at least two outer codes and a simple concatenation, and we give the parameters of a GCC.

    \begin{thm}\label{GCC_Dumer}
        Let $\pi(\mathcal{C})$ be a GCC as described above. Then the following conditions hold:
        \begin{itemize}
            \item[(i)] A generalized concatenated code $\pi(\mathcal{C})$ is a linear code of length $nN$, of dimension $\displaystyle{\sum_{i=1}^{s}k_{i}K_{i}}$ over $\mathbb{F}_q.$
            \item[(ii)] A generalized concatenated code $\pi(\mathcal{C})$ can be written as a direct sum of simple concatenations. Namely,
            \begin{equation}\label{GCC 2}
            \pi(\mathcal{C})= A_{1} \Box \mathcal{C}_{1} \oplus \cdots \oplus A_{s} \Box \mathcal{C}_{s}= \bigoplus_{i=1}^{s} A_i \Box \mathcal{C}_i,
            \end{equation}
            where $A_{i}=\pi(0,\ldots,0,x_i,0,\ldots,0)$'s are $k_i$-dimensional subcodes of $A$ and $x_i \in \mathbb{F}_{q^{k_i}}.$ Here, $A_i$'s are also called inner codes in the concatenation (\ref{GCC}).
            \item[(iii)]  Conversely, let $A_{i}$'s be $q$-ary linear codes of parameters $[n, k_{i},d(A_{i})]$ with $\displaystyle{A_{j}\cap \sum_{i \neq j} A_{i}=\{0\}},$ and let $\mathcal{C}_{i}$'s be $\mathbb{F}_{q^{k_i}}$-linear codes with parameters $[N, K_{i},d(\mathcal{C}_i)],$ for each $i \in \{1,\ldots,s\}.$ Then the direct sum of simple concatenations $\displaystyle{\bigoplus_{i=1}^{s} A_i \Box \mathcal{C}_i}$ can be redescribed as a \emph{GCC} code.
            \item [(iv)] For a given GCC in the form of $\pi(\mathcal{C})=A_{1} \Box \mathcal{C}_{1} \oplus \cdots \oplus A_{s} \Box \mathcal{C}_{s},$ we have
            $$d\left(\pi(\mathcal{C})\right) \geq \displaystyle{\min_{1\leq i \leq s}} \left\{d(\mathcal{C}_{i})d(A_1  \oplus \cdots \oplus A_i) \right\},$$ if $d\left(\mathcal{C}_{1}\right)\leq d\left(\mathcal{C}_{2}\right) \leq \cdots \leq d\left(\mathcal{C}_{s}\right).$

        \end{itemize}
    \end{thm}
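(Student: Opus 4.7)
My plan is to treat the four parts in order, using that $\pi$ is an $\mathbb{F}_q$-linear injection and viewing the matrix space column by column.

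For part (i), I note that $\mathcal{C}_i$, viewed as an $\mathbb{F}_q$-vector space, has dimension $k_i K_i$. The set $\mathcal{C}$ of $s\times N$ matrices is isomorphic as an $\mathbb{F}_q$-vector space to $\mathcal{C}_1\times\cdots\times\mathcal{C}_s$, hence has $\mathbb{F}_q$-dimension $\sum_{i=1}^s k_iK_i$. Applying $\pi$ column-wise yields a code of length $nN$ over $\mathbb{F}_q$; since $\pi$ is $\mathbb{F}_q$-linear and injective, the column-wise map is also $\mathbb{F}_q$-linear and injective, which preserves the dimension.

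For part (ii), define $A_i:=\pi(0,\ldots,0,\mathbb{F}_{q^{k_i}},0,\ldots,0)$. Since $\pi$ is injective and $\mathbb{F}_q$-linear, $A=A_1\oplus\cdots\oplus A_s$ as $\mathbb{F}_q$-spaces with $\dim_{\mathbb{F}_q}A_i=k_i$. Decomposing each column of $c\in\mathcal{C}$ through $\pi$ according to the $s$ slots gives $\pi(\mathcal{C})=\bigoplus_{i=1}^s A_i\,\square\,\mathcal{C}_i$, where the $i$-th summand corresponds to the matrices with only the $i$-th row nonzero. For part (iii), reverse the construction: choose any $\mathbb{F}_q$-linear isomorphisms $\phi_i:\mathbb{F}_{q^{k_i}}\to A_i$ and set $\pi(x_1,\ldots,x_s):=\sum_i \phi_i(x_i)$. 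The condition $A_j\cap\sum_{i\neq j}A_i=\{0\}$ is exactly what is needed to ensure $\pi$ is injective, so that $A=\bigoplus_i A_i$ is a valid inner code and Definition \ref{gcc} applies.

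The main step is (iv), the distance bound. Let $c\in\pi(\mathcal{C})$ be a nonzero codeword. Using the decomposition from (ii), write $c=\sum_{i=1}^s \pi(c^{(i)})$, where $c^{(i)}\in\mathcal{C}$ has only its $i$-th row nonzero, corresponding to some $(c^{(i)}_1,\ldots,c^{(i)}_N)\in\mathcal{C}_i$. Let $j$ be the largest index such that the $j$-th outer codeword is nonzero; then that outer codeword has weight at least $d(\mathcal{C}_j)$, so there are at least $d(\mathcal{C}_j)$ column indices $\ell$ for which the entry in row $j$ is nonzero. For each such $\ell$, the $\ell$-th column of $c$ lies in $A_1\oplus\cdots\oplus A_j$ (since rows $j+1,\ldots,s$ vanish by the choice of $j$) and is nonzero (its projection onto $A_j$ is nonzero, using the directness of the sum). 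Applying $\pi$ to that column produces a vector in $\mathbb{F}_q^n$ of Hamming weight at least $d(A_1\oplus\cdots\oplus A_j)$. Summing over the $\ell$'s gives $\mathrm{wt}(c)\ge d(\mathcal{C}_j)\,d(A_1\oplus\cdots\oplus A_j)$, and taking the minimum over $j$ yields the claimed bound; the ordering $d(\mathcal{C}_1)\le\cdots\le d(\mathcal{C}_s)$ is what makes the min over $j$ (rather than just $j=s$) the sharpest general lower bound.

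The only delicate point is ensuring that the columns produced in (iv) really are nonzero after projecting out the vanishing rows; this is where the directness of $A=\bigoplus_i A_i$ established in (ii)--(iii) is used essentially. The rest is bookkeeping.
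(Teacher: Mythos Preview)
The paper does not give its own proof of this theorem: it is stated as background material, with the construction and the distance bound attributed to Blokh--Zyablov \cite{bz} and Dumer \cite{d}. Your write-up is a correct reconstruction of the standard argument found in those references, so there is nothing to compare against in the paper itself.

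One small comment on part (iv): your choice of $j$ as the \emph{largest} index with a nonzero outer codeword and the ensuing estimate $\mathrm{wt}(c)\ge d(\mathcal{C}_j)\,d(A_1\oplus\cdots\oplus A_j)$ are correct and do not actually use the ordering hypothesis $d(\mathcal{C}_1)\le\cdots\le d(\mathcal{C}_s)$. The ordering is not needed for the inequality to hold; rather, since one is free to relabel the outer codes, arranging them in nondecreasing distance is what makes the displayed minimum the tightest version of the bound among all relabelings. Your closing sentence hints at this but phrases it as if the ordering were required for validity, which it is not.
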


    \medskip

    \subsection{Background on Quasi-Abelian Codes}

    We review the structure of quasi-abelian codes in this section, following \cite{js} closely (see also \cite{dkl}). We refer the reader to these articles for further details. Let us note that in the special case of quasi-cyclic codes, the material presented in this section has analogies with that presented in \cite{ls}.

    Let $G$ be a finite (additive) abelian group of order $n$. Consider the group algebra $\mathbb{F}_q[G]$, whose elements are of the form $\sum_{g\in G} \alpha_{g}Y^g$ for $\alpha_{g}\in \mathbb{F}_{q}$. The multiplicative identity of $\mathbb{F}_q[G]$ is $Y^0$. Note that $\mathbb{F}_q[G]$ can be considered as a vector space over $\mathbb{F}_q$ of dimension $\vert G\vert$.

    We call $\mathcal{C}$ a linear code in $\mathbb{F}_q[G]$ of length $n$ if it is an $\mathbb{F}_q$-subspace of $\mathbb{F}_q[G]$. Note that such a code can be viewed as a linear code of length $n$ over $\mathbb{F}_q$ by indexing the symbols in codewords with the elements in $G$. Hence, the Hamming weight ${\rm wt}(v)$ of $v=\sum_{g \in G} v_{g}Y^g \in \mathbb{F}_q[G]$ is the number of nonzero terms $v_{g}$ and the minimum distance of $\mathcal{C}$ is
    \begin{equation*}\label{minimumdistancebnd}
    d(\mathcal{C}):= \min \{ {\rm wt}(v) \vert v \in \mathcal{C}, v\neq 0 \}.
    \end{equation*}

    \begin{defn}
        A code $\mathcal{C}$ in $\mathbb{F}_q[G]$ is called an $H$ \emph{quasi-abelian code} ($H$-QA) of index $\ell$ if $\mathcal{C}$ is an $\mathbb{F}_q[H]$-module, where $H$ is a subgroup of $G$ with $[G:H]=\ell$.  We will only refer to these codes as QA codes, unless it is needed to specify the subgroup $H$ and the index.
    \end{defn}

    Let $\{g_{1},\ldots,g_{\ell}\}$ be a fixed set of representatives of the cosets of $H$ in $G$. Note that a QA code of index $\ell$ in $\mathbb{F}_q[G]$ can be seen as an $\F_q[H]$-submodule of $\F_q[H]^{\ell}$ by the following $\F_q[H]$-module isomorphism.
    \begin{equation}\begin{array}{lclc} \label{identification-1}
    \Phi: & \mathbb{F}_q[G] &\longrightarrow & \F_q[H]^\ell \\
    & \displaystyle{\sum_{i=1}^{\ell}\sum_{h \in H} \alpha_{h+g_{i}}Y^{h+g_{i}}}
    & \longmapsto & \displaystyle{\left(\sum_{h\in H} \alpha_{h+g_{1}}Y^h, \dots, \sum_{h\in H} \alpha_{h+g_{\ell}}Y^h \right)}.
    \end{array}
    \end{equation}

    \medskip

    \begin{rem}\label{special cases}

        It is clear that an $H$-QA code is QC if $H$ is cyclic. Moreover, if $H=J\times K$ with $J$ cyclic and $|K|=t$, then an $H$-QA code of index $\ell$ is a QC code of index $t\ell.$ By the fundamental theorem of finite abelian groups, every abelian group $H$ decomposes into products of cyclic groups. Hence the class of QA codes is a subclass of QC codes. For instance, if we choose $H=C_{m_1}\times C_{m_2},$ where $C_{m_i}$ denotes the cyclic group $\mathbb{Z}/{m_{i}}\mathbb{Z}$ of order $m_i$ for $i=1,2$, then an $H$-QA code can be viewed as a QC code of co-index $m_1$ or co-index $m_2.$ Moreover, as mentioned before in \cite{goz} and \cite{j} for certain special cases, we have various QA structures with different indices for a given QA code, since an $\mathbb{F}_q[H]$-module in $\mathbb{F}_q[H]^{\ell}$ is also an $\mathbb{F}_q[H^{'}]$-module, for any $H^{'}\leq H \leq G$.

        Jitman and Ling (\cite{js}) call a QA code $\mathcal{C}$ strictly QA (SQA) if $H$ is not a cyclic group. Similarly, if $\ell=1$ and $H$ is not cyclic, we refer to strictly abelian (SA) codes. %The following diagram shows the relations among the families of codes in discussion. Here, A denotes abelian codes and C denotes cyclic codes.

    \end{rem}

    We continue with recalling the CRT decomposition of $H$-QA codes of index $\ell,$ which is introduced in \cite{js} (see also \cite{dkl}). For a semisimple algebra $\mathbb{F}_q[H],$ where $H$ is a subgroup of a finite abelian $G$ with $|H|=m,$ let $M$ be the exponent of $H$ and let  $\mathbb{K}$ be an extension of $\mathbb{F}_q$ which contains a primitive $M$-th root of unity $\xi$. We set $R:=\mathbb{F}_q[H]$ throughout.

    A character $\chi$ from $H$ to the multiplicative group of $\K$ is a group homomorphism. The set $\Hom(H,\K^{*})$ of characters forms a group  which is isomorphic to $H$. So, we can denote the characters in $\Hom(H,\K^{*})$ as $\chi_{a}$, $a \in H$. If we view the abelian group $H$ as a direct product of finite cyclic groups,
    \begin{equation*}\label{decompfiniteabln}
    H=\prod_{i=1}^{s} C_{m_i},
    \end{equation*}
    then an element $h\in H$ can be represented as $h=(h_{1}, \cdots, h_{s})$, where $h_{i}\in C_{m_i}$ and $C_{m_i}$ denotes the additive cyclic group $ \mathbb{Z}/ {m_{i}}\mathbb{Z}$ of order $m_i.$ In this case, it is well-known that the character $\chi_{a}$ can be written as \begin{equation}\label{char}
    \chi_{a}(h)=\xi^{\sum_{i=1}^{s}a_{i}h_{i}M/m_{i}},
    \end{equation} for any $a\in H$.

    Recall that a primitive idempotent of a ring is a nonzero element $e$ such that $e^2=e$ and for any other idempotent $f$, either $ef=0$ or $ef=e$. To present the decomposition of QA codes, we will need to use idempotents in $R=\F_q[H]$. For this purpose, one first considers the group algebra $\mathbb{K}[H]$, whose primitive idempotents are given by
    \begin{equation}\label{idempotentsinK}
    E_{x}= \frac{1}{m}\sum_{a\in H}\chi_{x}(-a)Y^{a} \in \mathbb{K}[H],
    \end{equation}
    for each $x\in H$. The primitive idempotents of $\mathbb{K}[H]$ are orthogonal, i.e. $E_xE_y=0$ if $x,y\in H$ and $x\not= y$.

    The $q$-\emph{cyclotomic class} of $H$ containing $h\in H$ is defined as
    \begin{equation}\label{qcyccosets}
    S_{q}(h):= \{q^{i}h : 0 \leq i < v_{h}\},
    \end{equation}
    where $q^{i}h$ denotes addition of $h$ with itself $q^i$ times (recall that $G$ and hence $H$ are additive groups), and $v_h$ is the smallest positive integer such that $q^{v_h}\equiv 1  \ ({\rm ord} \ h)$.
    Primitive idempotents in $\mathbb{F}_q[H]$ are of the form
    \begin{equation}\label{idemptsinF}
    \displaystyle e_h= \sum_{x\in S_{q}(h)}E_x,
    \end{equation}
    where $h\in H$ and $E_x$ is a primitive idempotent in $\mathbb{K}[H]$ as in (\ref{idempotentsinK}). The idempotent $e_h$ is called the primitive idempotent induced by $S_{q}(h)$. Orthogonality of the primitive idempotents of $\mathbb{K}[H]$ implies orthogonality of the primitive idempotents in $\mathbb{F}_q[H]$:
    \begin{equation}\label{orthogonality}
    e_he_{h'}=0, \ \ \mbox{if $h,h'\in H$ have distinct $q$-cyclotomic classes.}
    \end{equation}
    If $S_{q}(h_1), \ldots, S_{q}(h_t)$ are all $q$-cyclotomic classes of $H$ and $e_{h_1},\ldots,e_{h_t}$ are the corresponding primitive idempotents of $\mathbb{F}_q[H]$, then we have
    \begin{equation}\label{idempsum}
    \sum_{i=1}^{t}e_{h_i}=1.
    \end{equation}
    Moreover, primitive idempotents of $R=\F_q[H]$ yields the decomposition
    \begin{equation}\label{decompR}
    R=\displaystyle{\bigoplus_{i=1}^{t}Re_{h_i}}.
    \end{equation}
    The ideal $Re_{h_i}$ generated by $e_{h_i}$ in the group algebra $R$ is an abelian code (\cite{j}). Moreover, $Re_{h_i}$ is an extension field of $\mathbb{F}_q$ with the extension degree $S_{q}(h_i)$ (for all $1\leq i \leq t$). The maps yielding the identification of $Re_{h_i}$ with the extension $\mathbb{E}_i$ of $\F_q$ are
    \begin{equation}\begin{array}{lcll} \label{concatenation map-1}
    \varphi_i: &  Re_{h_i}&\longrightarrow & \mathbb{E}_i \\
    &\displaystyle{\left(\sum_{h\in H}\alpha_{h}Y^{h}\right)e_{h_i}}
    &\longmapsto & \displaystyle{\sum_{h\in H}\alpha_{h}\chi_{h_i}(h)},
    \end{array}
    \end{equation}

    \begin{equation}\begin{array}{lclc} \label{concatenation map-2}
    \psi_i: & \mathbb{E}_i &\longrightarrow & Re_{h_i}  \\
    & \delta &\longmapsto & \displaystyle{\sum_{k\in H}\alpha_{k}Y^{k}},
    \end{array}
    \end{equation}
    where $\alpha_{k}=\frac{1}{m}\Tr(\delta  \chi_{h_i}(-k))$. Here, $\Tr$ denotes the trace map from $\mathbb{E}_i$ to $\mathbb{F}_q$. Note that $\varphi_i$ and $\psi_i$ are nontrivial ring homomorphisms and they are inverse to each other for every $1\leq i \leq t.$ Moreover, $\varphi_i(e_{h_i})=1$ and hence $\psi_i(1)=e_{h_i}$.

    By (\ref{idempsum}), any element $r\in R$ can be written as $r=re_{h_1}+\cdots +re_{h_t}$. For an element $(r_1,\ldots ,r_\ell)\in R^\ell$, we have
    \begin{eqnarray*}
        (r_1,\ldots ,r_\ell)&=&(r_1e_{h_1}+\cdots +r_1e_{h_t}, \ldots ,r_{\ell}e_{h_1}+\cdots +r_{\ell}e_{h_t} )\\
        &=& (r_1e_{h_1},\ldots ,r_\ell e_{h_1})+\cdots + (r_1e_{h_t},\ldots ,r_\ell e_{h_t}).
    \end{eqnarray*}
    Using the isomorphisms $\varphi_1 ,\ldots ,\varphi_t$, we can identify $R^\ell$ and $\oplus_{i=1}^{t} \mathbb{E}_i^{\ell}$:
    \[\begin{array}{ccc}
    R^\ell & \longrightarrow & \mathbb{E}_1^{\ell}\oplus \cdots \oplus \mathbb{E}_t^{\ell} \\
    (r_1,\ldots ,r_\ell) & \longmapsto & \bigl(\varphi_1(r_1e_{h_1}),\ldots ,\varphi_1(r_\ell e_{h_1})\bigr)+\cdots + \bigl(\varphi_t(r_1e_{h_t}),\ldots ,\varphi_t(r_\ell e_{h_t})\bigr)
    \end{array}\]
    Consequently, an $R$-submodule of $R^{\ell}$ can be viewed as $\oplus_{i=1}^{t} \mathbb{E}_i$-submodule of $\oplus_{i=1}^{t} \mathbb{E}_i^{\ell}.$ Therefore, a QA code $C\subseteq R^\ell$ decomposes as
    \begin{equation} \label{constituents}
    \mathcal{C}=\mathcal{C}_1\oplus \cdots  \oplus \mathcal{C}_{t},
    \end{equation}
    where $\mathcal{C}_i\subset \mathbb{E}_i^\ell$ is a linear code of length $\ell$ over the field $\mathbb{E}_i$ for every $1 \leq i \leq t$. We call $\mathcal{C}_i$'s the constituents of $\mathcal{C}$. The preceding arguments yield the explicit description of the constituents (for $1\leq i \leq t$):
    \begin{equation}\label{explicitconst}
    \mathcal{C}_i=\Bigl\{\bigl(\varphi_i(c_1e_{h_i}),\ldots ,\varphi_i(c_\ell e_{h_i})  \bigr): (c_1,\ldots ,c_\ell)\in C \Bigr\}.
    \end{equation}

    \medskip

    \section{The Concatenated Structure of Quasi-Abelian Codes}\label{results}

    Jensen gave the concatenated structure of abelian and QC codes in \cite{j}. It was later shown in \cite{go} that the CRT decomposition of a QC code in \cite{ls} and the code's concatenated decomposition by Jensen are equivalent. Here, we give the concatenated structure of QA codes and prove the analog of the result in \cite{go}.

    Consider the rings $R^\ell=\mathbb{F}_q[H]^\ell$ and $\mathbb{E}_{i}^{\ell}$ (for $i\in\{1,\ldots,t\}$), where the ring operations are clearly componentwise addition and multiplication. Using the maps $\varphi_i$ and $\psi_i$ in \eqref{concatenation map-1} and \eqref{concatenation map-2}, we define
    \begin{equation}\begin{array}{lclc} \label{concatenation map-3}
    \Psi_i: & \mathbb{E}_i^\ell &\longrightarrow & R^\ell  \\
    & (a_1,\ldots,a_\ell)
    &\longmapsto & (\psi_{i}(a_1),\ldots,\psi_{i}(a_\ell))
    \end{array}
    \end{equation}
    and
    \begin{equation}\begin{array}{lclc} \label{concatenation map-4}
    \Phi_i: &  R^\ell  &\longrightarrow & \mathbb{E}_i^\ell  \\
    & \left(\displaystyle{\sum_{h\in H} \alpha_h^{1}Y^h,\ldots,\sum_{h\in H} \alpha_h^{\ell}Y^h} \right) &\longmapsto & \left(\displaystyle{\sum_{h\in H}\alpha_h^{1}\chi_{h_i}(h),\ldots,\sum_{h\in H} \alpha_h^{\ell}\chi_{h_i}(h)}\right).
    \end{array}
    \end{equation}
    Note that $\Psi_i$ and $\Phi_i$ are $\F_q$-linear ring homomorphisms (for $i\in\{1,\ldots,t\}$). Moreover they are inverse to each other when $\Phi_i$ is restricted to the image of $\Psi_i$. Next we describe the primitive idempotents of  $R^\ell$.

    \begin{thm} \label{idempts. in FG^l}
        For each $i\in\{1,\ldots,t\}$, let $\Theta_{i}:= \Psi_{i}(1,\ldots,1)=(e_{h_i},\ldots, e_{h_i})$. Then $\langle \Theta_i \rangle = \Psi_{i}(\mathbb{E}_{i}^\ell)$ and $\displaystyle R^\ell=\bigoplus_{i=1}^{t} \langle \Theta_{i} \rangle $. Moreover,
        \begin{equation*}\label{idempotent}
        \Theta_{i} \Theta_{j}=
        \begin{cases}
        \Theta_{i} & \text{if } i = j,\\
        0 & \text{if } i \neq j,
        \end{cases}
        \end{equation*} and
        $\displaystyle \sum_{i=1}^{t} \Theta_{i}= (1,\ldots,1) $ in $R^\ell.$
    \end{thm}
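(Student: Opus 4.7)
The plan is to reduce every assertion to the corresponding statement about $R$ and its primitive idempotents $e_{h_i}$, since $R^\ell$ carries componentwise ring operations and the maps $\Psi_i$ act componentwise via $\psi_i$.

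First I would unpack the definition of $\Theta_i$: by \eqref{concatenation map-3}, $\Theta_i = \Psi_i(1,\ldots,1) = (\psi_i(1),\ldots,\psi_i(1))$, and the paragraph just after \eqref{concatenation map-2} gives $\psi_i(1) = e_{h_i}$, so $\Theta_i = (e_{h_i},\ldots,e_{h_i})$. The two identities $\Theta_i\Theta_j = \delta_{ij}\Theta_i$ and $\sum_i \Theta_i = (1,\ldots,1)$ are then immediate from the componentwise product: they follow respectively from the orthogonality relation \eqref{orthogonality} (together with $e_{h_i}^2 = e_{h_i}$) and from \eqref{idempsum}.

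Next I would establish $\langle \Theta_i \rangle = \Psi_i(\mathbb{E}_i^\ell)$ by two inclusions. For $\supseteq$, take $(a_1,\ldots,a_\ell)\in\mathbb{E}_i^\ell$; each $\psi_i(a_j) \in Re_{h_i}$, so $\psi_i(a_j) = r_j e_{h_i}$ for some $r_j \in R$, and hence $\Psi_i(a_1,\ldots,a_\ell) = (r_1,\ldots,r_\ell)\cdot(e_{h_i},\ldots,e_{h_i}) = (r_1,\ldots,r_\ell)\,\Theta_i \in \langle \Theta_i\rangle$. For $\subseteq$, an arbitrary element of $\langle \Theta_i\rangle$ has the form $(s_1,\ldots,s_\ell)\Theta_i = (s_1 e_{h_i},\ldots,s_\ell e_{h_i})$, and since $\psi_i \colon \mathbb{E}_i \to Re_{h_i}$ is a bijection (it is inverse to $\varphi_i$), each coordinate is $\psi_i$ of some element of $\mathbb{E}_i$, placing the vector in $\Psi_i(\mathbb{E}_i^\ell)$.

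Finally, for the direct sum decomposition, I would use \eqref{decompR} coordinatewise: given $\mathbf{r} = (r_1,\ldots,r_\ell) \in R^\ell$, each $r_j = \sum_i r_j e_{h_i}$, whence
\[
\mathbf{r} \;=\; \sum_{i=1}^t (r_1 e_{h_i},\ldots,r_\ell e_{h_i}) \;=\; \sum_{i=1}^t \mathbf{r}\,\Theta_i,
\]
with $\mathbf{r}\,\Theta_i \in \langle \Theta_i\rangle$, so $R^\ell = \sum_i \langle \Theta_i\rangle$. To see that the sum is direct, suppose $w \in \langle \Theta_i\rangle \cap \sum_{j\neq i} \langle \Theta_j\rangle$ and write $w = w'\Theta_i = \sum_{j\neq i} u_j \Theta_j$; multiplying the two expressions for $w$ by $\Theta_i$ and using the orthogonality just established yields $w\,\Theta_i = w'\Theta_i = w$ on one side and $0$ on the other, forcing $w = 0$.

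There is no real obstacle here beyond bookkeeping: the whole argument is a coordinatewise transfer of the semisimple decomposition $R = \bigoplus_i R e_{h_i}$ to $R^\ell$, with the only subtle point being to check that the image of $\Psi_i$ coincides exactly with the principal ideal generated by $\Theta_i$, which hinges on the bijectivity of $\psi_i \colon \mathbb{E}_i \to R e_{h_i}$.
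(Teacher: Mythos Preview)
Your proof is correct and follows essentially the same approach as the paper: both reduce everything componentwise to the known idempotent decomposition of $R$. The only cosmetic difference is that the paper verifies directness via $Re_{h_i}\cap Re_{h_j}=\{0\}$ coordinatewise and then obtains spanning by an $\F_q$-dimension count, whereas you obtain spanning by explicitly decomposing each $\mathbf{r}$ as $\sum_i \mathbf{r}\Theta_i$ and obtain directness from the orthogonality of the $\Theta_i$ you just established; these are interchangeable packagings of the same argument.
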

    \begin{proof}
        The equality $\langle \Theta_i \rangle = \Psi_{i}(\mathbb{E}_{i}^\ell)$ follows immediately from the definitions of $\psi_i$ and $\Psi_i$. Suppose $(f_1,\ldots ,f_\ell)$ in $R^\ell$ belongs to the intersection of $\langle \Theta_i \rangle$ and $\langle \Theta_j \rangle$ for $i\not= j$. This implies that for all $u\in\{1,\ldots,\ell\}$, $f_u\in Re_{h_i}\cap Re_{h_j}$, which is trivial by (\ref{decompR}). So, $ \displaystyle \bigoplus_{i=1}^{t} \langle \Theta_{i} \rangle $ is indeed a direct sum in $R^\ell$. Since $\langle \Theta_i \rangle = \Psi_{i}(\mathbb{E}_{i}^\ell)$, we have $\dim_{\F_q}\langle \Theta_i \rangle=\ell [\E_i:\F_q]$. Hence,
        \begin{eqnarray*}
            \dim_{\F_q} \displaystyle \bigoplus_{i=1}^{t} \langle \Theta_{i} \rangle & = & \ell \sum_{i=1}^t [\E_i:\F_q]\\
            &=&\ell \sum_{i=1}^t \dim_{\F_q}Re_{h_i} \ \ \mbox{by (\ref{concatenation map-1})}\\
            &=& \ell \dim_{\F_q}R \ \ \mbox{by (\ref{decompR})}.
        \end{eqnarray*}
        Hence, $R^\ell=\displaystyle \bigoplus_{i=1}^{t} \langle \Theta_{i} \rangle$. The other assertions easily follow from (\ref{orthogonality}) and (\ref{idempsum}).
    \end{proof}

    Next, we describe the concatenated structure of QA codes. We denote the concatenation operation by $\Box$ as commonly done in the literature. In the following, we use the set defined as
    $$\mathcal{C} s := \{c s : \ c\in \mathcal{C}\},$$
    for $\mathcal{C} \subseteq R^\ell$ and an element $s\in R^\ell$.

    \begin{thm} \label{concatenated decomposition}
        With the notation above, the following conditions hold:
        \begin{itemize}
            \item [ (i)] Let $\mathcal{C}$ be an $R$-submodule of $R^\ell$ and $\tilde{\mathcal{C}}_i:= \mathcal{C}  \Theta_i \subseteq R^\ell$ for all $i=1,\ldots, t$. Then, for some subset $\mathfrak{I} \subseteq \{1,\ldots,t\}$, we have $\mathcal{C}=\bigoplus_{i\in \mathfrak{I}} \tilde{\mathcal{C}}_i$. Moreover, $\tilde{\mathcal{C}}_i= R e_{h_i}  \Box \mathfrak{C}_i $, where $\mathfrak{C}_i=\Phi_i(\tilde{\mathcal{C}_i})$ is an $\mathbb{E}_i$-linear code of length $\ell$ for each $i$.
            \item [(ii)] Conversely, let $\mathfrak{C}_i$ be a linear code over $\mathbb{E}_i$ of length $\ell$ for all $i$ in some subset $\mathfrak{I}$ of $\{1, \ldots,t\}$. Then, $\displaystyle \mathcal{C}= \bigoplus_{i\in \mathfrak{I}} R e_{h_i} \Box \mathfrak{C}_i$ is an $H$-QA code of index $\ell$.
        \end{itemize}
    \end{thm}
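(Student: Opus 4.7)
The argument for (i) and (ii) runs parallel to the quasi-cyclic case treated in \cite{go}, but the main structural ingredients have been set up in Theorem~\ref{idempts. in FG^l} and in the maps $\Phi_i,\Psi_i$ of \eqref{concatenation map-3}--\eqref{concatenation map-4}.

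For (i), the starting observation is that the identity $(1,\ldots,1)\in R^\ell$ equals $\sum_{i=1}^{t}\Theta_i$ and that $\Theta_i\Theta_j=\delta_{ij}\Theta_i$. Hence every $c\in\mathcal{C}\subseteq R^\ell$ decomposes uniquely as $c=\sum_{i=1}^t c\Theta_i$, with $c\Theta_i\in\tilde{\mathcal{C}}_i$, and the pairwise orthogonality of the $\Theta_i$ together with $R^\ell=\bigoplus_i\langle\Theta_i\rangle$ forces this sum to be a direct sum. Taking $\mathfrak{I}:=\{i:\tilde{\mathcal{C}}_i\neq\{0\}\}$ gives $\mathcal{C}=\bigoplus_{i\in\mathfrak{I}}\tilde{\mathcal{C}}_i$. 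For the concatenation identification, note that the coordinates of any element of $\tilde{\mathcal{C}}_i=\mathcal{C}\Theta_i$ lie componentwise in $Re_{h_i}$, since $\Theta_i=(e_{h_i},\ldots,e_{h_i})$. Applying $\Phi_i$ componentwise and using that $\varphi_i:Re_{h_i}\to\mathbb{E}_i$ is a ring isomorphism with inverse $\psi_i$, one obtains $\tilde{\mathcal{C}}_i=\Psi_i(\mathfrak{C}_i)$ where $\mathfrak{C}_i=\Phi_i(\tilde{\mathcal{C}}_i)\subseteq\mathbb{E}_i^\ell$. I then have to verify that $\mathfrak{C}_i$ is actually $\mathbb{E}_i$-linear rather than just $\mathbb{F}_q$-linear; this follows because for any $r\in R$ and $c\in\mathcal{C}$, the product $rc\Theta_i=(re_{h_i})\cdot c\Theta_i$ corresponds under $\Phi_i$ to scalar multiplication of $\Phi_i(c\Theta_i)$ by $\varphi_i(re_{h_i})\in\mathbb{E}_i$, and as $r$ varies through $R$ the element $\varphi_i(re_{h_i})$ ranges over all of $\mathbb{E}_i$ (surjectivity of $\varphi_i$). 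Finally, the equality $\tilde{\mathcal{C}}_i=Re_{h_i}\,\Box\,\mathfrak{C}_i$ is just the definition of simple concatenation: the inner encoder is $\psi_i:\mathbb{E}_i\xrightarrow{\sim}Re_{h_i}$ applied symbol-by-symbol, and this is exactly~$\Psi_i$.

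For (ii), I would reverse the construction. Given $\mathbb{E}_i$-linear outer codes $\mathfrak{C}_i$ for $i\in\mathfrak{I}$, set $\tilde{\mathcal{C}}_i:=\Psi_i(\mathfrak{C}_i)\subseteq R^\ell$ and $\mathcal{C}:=\bigoplus_{i\in\mathfrak{I}}\tilde{\mathcal{C}}_i$. Since the $\langle\Theta_i\rangle$ are pairwise orthogonal and $\tilde{\mathcal{C}}_i\subseteq\langle\Theta_i\rangle$ by Theorem~\ref{idempts. in FG^l}, the sum is direct. The remaining point is that $\mathcal{C}$ is closed under the $R$-action on $R^\ell$, equivalently that each $\tilde{\mathcal{C}}_i$ is. For $r\in R$ and $\Psi_i(a_1,\ldots,a_\ell)\in\tilde{\mathcal{C}}_i$, componentwise multiplication and the ring-homomorphism property of $\varphi_i,\psi_i$ give
\begin{equation*}
r\cdot\Psi_i(a_1,\ldots,a_\ell)=\bigl(re_{h_i}\,\psi_i(a_1),\ldots,re_{h_i}\,\psi_i(a_\ell)\bigr)=\Psi_i\bigl(\varphi_i(re_{h_i})\cdot(a_1,\ldots,a_\ell)\bigr),
\end{equation*}
which lies in $\Psi_i(\mathfrak{C}_i)=\tilde{\mathcal{C}}_i$ by $\mathbb{E}_i$-linearity of $\mathfrak{C}_i$.

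\textbf{Main obstacle.} The step that needs the most care is recognizing that, on the component $\tilde{\mathcal{C}}_i$, the action of $R$ on $R^\ell$ passes through $\varphi_i$ to $\mathbb{E}_i$-scalar multiplication on $\mathfrak{C}_i$. This is what simultaneously promotes $\mathfrak{C}_i$ from an $\mathbb{F}_q$-space to an $\mathbb{E}_i$-linear code in part (i), and makes the converse construction in (ii) produce an honest $R$-submodule (hence an $H$-QA code). Everything else is bookkeeping with the idempotents $e_{h_i}$, $\Theta_i$, the orthogonality relations~\eqref{orthogonality}, and the identity~\eqref{idempsum}.
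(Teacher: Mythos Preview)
Your proposal is correct and follows essentially the same route as the paper: both arguments use the idempotent decomposition of Theorem~\ref{idempts. in FG^l} to split $\mathcal{C}$ as $\bigoplus_{i\in\mathfrak{I}}\tilde{\mathcal{C}}_i$, both establish $\mathbb{E}_i$-linearity of $\mathfrak{C}_i$ via surjectivity of $\varphi_i$ together with the $R$-module structure of $\mathcal{C}$, and both identify $\tilde{\mathcal{C}}_i$ with the simple concatenation $Re_{h_i}\Box\mathfrak{C}_i$ through the mutually inverse maps $\varphi_i,\psi_i$. The only cosmetic difference is that in part~(ii) the paper verifies $R$-closure on the generators $Y^x$ whereas you handle an arbitrary $r\in R$ directly; your formulation is slightly more streamlined but the content is identical.
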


    \begin{proof}
        (i) By Theorem \ref{idempts. in FG^l} we have
        \begin{equation*}
        \displaystyle \mathcal{C}=\mathcal{C} \sum_{i=1}^{t}\Theta_{i}= \sum_{i\in \mathfrak{I}} \tilde{\mathcal{C}_i},
        \end{equation*}
        where $\mathfrak{I}$ consists of indices $i$ for which $\tilde{\mathcal{C}_i}\neq \{0\}.$ Since $\tilde{\mathcal{C}_i}$ lies in the ideal $\langle \Theta_{i} \rangle$ and the sum of these ideals is direct, the sum $\displaystyle \sum_{i\in \mathfrak{I}}\tilde{\mathcal{C}_i}$ is also direct.

        On the other hand, for all $i\in \mathfrak{I}$, we have
        \begin{eqnarray*}
            \tilde{\mathcal{C}}_i & = & \mathcal{C} \Theta_{i}\\
            & = & \left\{\left(c_{1},\ldots,c_{\ell}\right) \left(e_{h_i},\ldots, e_{h_i}\right): \left(c_1,\ldots,c_{\ell}\right)\in \mathcal{C}\right\}\\
            & = & \{(c_{1}e_{h_i},\ldots,c_{\ell}e_{h_i}): (c_1,\ldots,c_{\ell})\in \mathcal{C}\}.
        \end{eqnarray*}
        Hence,
        \begin{equation}\label{eq-2}
        \mathfrak{C}_i = \Phi_{i}\left(\tilde{\mathcal{C}_i}\right)= \Bigl\{ \bigl(\varphi_{i}(c_{1}e_{h_i}),\ldots,\varphi_{i}(c_{\ell}e_{h_i})\bigr): (c_1,\ldots,c_{\ell})\in \mathcal{C} \Bigr\}.
        \end{equation}

        Since $\tilde{\mathcal{C}_i}$ and $\Phi_{i}$ are $\mathbb{F}_q$-linear, each $\mathfrak{C}_i$ is an $\mathbb{F}_q$-linear code of length $\ell$. The map $\varphi_i$ in (\ref{concatenation map-1}) is bijective. Therefore for any $\delta \in \mathbb{E}_i$, there exists $f\in R$ such that $\varphi_{i}(fe_{h_i})= \delta.$ So, for any $(\varphi_{i}(c_{1}e_{h_i}),\dots,\varphi_{i}(c_{\ell}e_{h_i})) \in \mathfrak{C}_i $, we have
        \begin{eqnarray}
        \delta (\varphi_{i}(c_{1}e_{h_i}),\dots,\varphi_{i}(c_{\ell}e_{h_i})) &=& (\varphi_{i}(fe_{h_i})\varphi_{i}(c_{1}e_{h_i}),\dots,\varphi_{i}(fe_{h_i})\varphi_{i}(c_{\ell}e_{h_i})) \nonumber\\
        &=& (\varphi_{i}(fc_{1}e_{h_i}),\dots,\varphi_{i}(fc_{\ell}e_{h_i})).   \label{eq-1}
        \end{eqnarray}
        Since $\mathcal{C}$ is an $R$-module, $(fc_{1},\dots,fc_{\ell})$ lies in $\mathcal{C}$. Therefore, (\ref{eq-1}) belongs to $\mathfrak{C}_i$, which shows that $\mathfrak{C}_i$ is $\mathbb{E}_i$-linear.

        Now, consider the concatenated code $R e_{h_i}  \Box \mathfrak{C}_i$ determined by $\psi_{i}: \mathbb{E}_i \rightarrow Re_{h_i}$ in (\ref{concatenation map-2}):
        $$R e_{h_i} \Box \mathfrak{C}_{i} = \left\{\left( \psi_i(\varphi_i(c_1e_{h_i})),\ldots,\psi_i(\varphi_i(c_{\ell}e_{h_i}))  \right): (c_{1},\ldots,c_{\ell})\in \mathcal{C} \right\}.$$
        Since $\psi_i$ and $\phi_i$ are inverse to each other, we have
        $$R e_{h_i} \Box \mathfrak{C}_{i} = \left\{( c_{1}e_{h_i},\ldots, c_{\ell}e_{h_i}): (c_{1},\ldots,c_{\ell})\in \mathcal{C} \right\}=\tilde{\mathcal{C}_i},$$
        which completes the proof.

        (ii) Let $\mathfrak{C}_i$ be an $\mathbb{E}_i$ linear code of length $\ell$ and consider the concatenation
        $$
        Re_{h_i} \Box \mathfrak{C}_{i}= \{( \psi_{i}(\lambda_{1}),\ldots, \psi_{i}(\lambda_{\ell})) : (\lambda_{1},\ldots,\lambda_{\ell})\in \mathfrak{C}_i \}
        $$
        for each $i\in \mathfrak{I}$. By linearity of $\mathfrak{C_i}$ and $\psi_i$, this set becomes an additive subgroup of $R^{\ell}$. We need to show that $Re_{h_i} \Box \mathfrak{C}_{i}$ is closed under multiplication by elements of $R$. For this, it is enough to show that it is closed under multiplication by $Y^x\in R$, for any $x\in H$. Since $\varphi_i$ is surjective, we can write an element $(\lambda_{1},\ldots ,\lambda_{\ell})\in \mathfrak{C}_i$ as
        $(\varphi_i(f_1e_{h_i}),\ldots ,\varphi_i(f_\ell e_{h_i}))$ for some $f_1,\ldots ,f_\ell \in R$. Then,
        \begin{eqnarray}
        Y^x( \psi_{i}(\lambda_{1}),\ldots, \psi_{i}(\lambda_{\ell})) &=&Y^x(f_1e_{h_i},\ldots ,f_\ell e_{h_i}) \ \ \ \mbox{($\psi_i$ and $\varphi_i$ are inverse)} \nonumber \\
        &=& \bigl(Y^xe_{h_i}f_1e_{h_i},\ldots ,Y^xe_{h_i}f_\ell e_{h_i}\bigr) \ \ \ \mbox{(using $e_{h_i}e_{h_i}=e_{h_i}$)} \nonumber\\
        &=& \left(\psi_{i}\Bigl(\varphi_i \bigl(Y^xe_{h_i}f_1e_{h_i}\bigr)\Bigr),\ldots ,\psi_{i}\Bigl(\varphi_i \bigl(Y^xe_{h_i}f_\ell e_{h_i}\bigr)\Bigr)\right) \nonumber\\
        &=& \left(\psi_{i}\Bigl(\varphi_i \bigl(Y^xe_{h_i}\bigr) \varphi_i \bigl(f_1e_{h_i}\bigr)\Bigr),\ldots ,\psi_{i}\Bigl(\varphi_i \bigl(Y^xe_{h_i}\bigr) \varphi_i \bigl(f_\ell e_{h_i}\bigr)\Bigr)\right). \label{eq-3}
        \end{eqnarray}
        Since $\varphi_i \bigl(Y^xe_{h_i}\bigr)$ is in $\mathbb{E}_i$ and $\mathfrak{C}_i$ is $\mathbb{E}_i$-linear, $\Bigl(\varphi_i \bigl(Y^xe_{h_i}\bigr) \varphi_i \bigl(f_1e_{h_i}\bigr),\ldots , \varphi_i \bigl(Y^xe_{h_i}\bigr) \varphi_i \bigl(f_\ell e_{h_i}\bigr) \Bigr)$ belongs to $\mathfrak{C}_i$. Therefore (\ref{eq-3}) is in $Re_{h_i}\Box \mathfrak{C}_i$.

        Finally, $Re_{h_i} \Box \mathfrak{C}_{i}$ lies in $(Re_{h_i})^\ell$ (for each $i$) and $Re_{h_i}$'s intersect trivially (cf. (\ref{decompR})). Therefore the sum of the concatenations $Re_{h_i} \Box \mathfrak{C}_{i}$, for $i\in \mathfrak{I}$, is direct. Hence the result follows.
    \end{proof}

    \begin{rem}
        For a QA code $\mathcal{C}$, the constituent $\mathcal{C}_i$ and the outer code $\mathfrak{C}_i$ in its concatenated form coincide, for each $i$. This follows from (\ref{explicitconst}) and (\ref{eq-2}).
    \end{rem}

    The minimum distance bound which is valid for all concatenated codes (see \cite{bz, d}, and part $iv$ in Theorem \ref{GCC_Dumer},) apply to QA codes by Theorem \ref{concatenated decomposition}. So, we do not prove the next result.

    \begin{cor} \label{minimum distance bound}
        Let $\mathcal{C}$ be a QA code of index $\ell$ in $R^{\ell}$ with the concatenated structure
        $$C = \bigoplus_{j=1}^g R e_{h_{i_j}}  \Box \mathfrak{C}_{i_j},$$
        where $\mathfrak{C}_{i_j}$'s are the nonzero outer codes (constituents) of $\mathcal{C},$ and $Re_{h_{i_j}}$'s are minimal abelian codes generated by primitive idempotents in $R^{\ell}$ for $\{i_1,\ldots ,i_g\} \subseteq \{1,\ldots,t\}.$ Assume that $d\left(\mathfrak{C}_{i_1}\right)\leq d\left(\mathfrak{C}_{i_2}\right) \leq \cdots \leq d\left(\mathfrak{C}_{i_g}\right)$. Then, we have
        $$d\left(C\right) \geq \displaystyle{\min_{1\leq v \leq g}} \left\{d(\mathfrak{C}_{i_v})d(Re_{h_{i_1}}  \oplus \cdots \oplus R e_{h_{i_v}}) \right\}.$$
    \end{cor}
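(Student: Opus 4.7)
The plan is to observe that Corollary \ref{minimum distance bound} is an immediate consequence of combining Theorem \ref{concatenated decomposition}(i) with parts (iii) and (iv) of Theorem \ref{GCC_Dumer}. The real work is to check that the concatenated decomposition produced by Theorem \ref{concatenated decomposition} fits the framework of Definition \ref{gcc}, after which the bound is read off without any further computation.

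Concretely, I would proceed as follows. By Theorem \ref{concatenated decomposition}(i), the QA code admits the expression $\mathcal{C}=\bigoplus_{j=1}^{g}Re_{h_{i_j}}\Box\mathfrak{C}_{i_j}$, where each inner code $A_j:=Re_{h_{i_j}}$ is an $\mathbb{F}_q$-linear code of length $|H|$ and dimension $[\mathbb{E}_{i_j}:\mathbb{F}_q]$, and each outer code $\mathfrak{C}_{i_j}$ is $\mathbb{E}_{i_j}$-linear of length $\ell$. To invoke Theorem \ref{GCC_Dumer}(iii) and re-express this direct sum as a GCC, I need the transversality condition $A_j\cap\sum_{k\neq j}A_k=\{0\}$, which is precisely the directness of the decomposition \eqref{decompR}; this in turn is a consequence of the orthogonality relation \eqref{orthogonality} together with the partition-of-unity identity \eqref{idempsum}. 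Once this is in place, Theorem \ref{GCC_Dumer}(iii) certifies that $\mathcal{C}$ is a GCC in the sense of Definition \ref{gcc}.

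Having done so, Theorem \ref{GCC_Dumer}(iv) applies verbatim: under the ordering hypothesis $d(\mathfrak{C}_{i_1})\leq\cdots\leq d(\mathfrak{C}_{i_g})$ one obtains
\[
d(\mathcal{C})\;\geq\;\min_{1\leq v\leq g}\Bigl\{d(\mathfrak{C}_{i_v})\,d\bigl(Re_{h_{i_1}}\oplus\cdots\oplus Re_{h_{i_v}}\bigr)\Bigr\},
\]
which is exactly the claimed inequality. The only potentially delicate step is the verification that the simple concatenations $Re_{h_{i_j}}\Box\mathfrak{C}_{i_j}$ furnished by Theorem \ref{concatenated decomposition} really satisfy the intersection hypothesis of Theorem \ref{GCC_Dumer}(iii); but as noted, this is immediate from the semisimple structure of $R$ encoded in \eqref{orthogonality}--\eqref{decompR}, so no serious obstacle arises, which is why the result can legitimately be stated without a separate proof.
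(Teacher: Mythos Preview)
Your proposal is correct and matches the paper's own treatment: the authors explicitly decline to give a separate proof, noting only that the general GCC minimum distance bound (Theorem~\ref{GCC_Dumer}(iv)) applies to QA codes by virtue of Theorem~\ref{concatenated decomposition}. Your additional remark that the transversality hypothesis of Theorem~\ref{GCC_Dumer}(iii) follows from the directness of \eqref{decompR} is a helpful elaboration, but the overall argument is exactly the one the paper intends.
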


    \medskip

    \section{Numerical Results and Asymptotics}\label{results2}
    The concatenated structure of QA codes in Section \ref{results} allows us to obtain numerical results and asymptotic conclusions for QA codes, as it is shown here.
    %IN THE THEOREM WE SHOULD RECALL THE NOTATION !? Notation of GCC or concatenated decomposition of QA?

    \subsection[Numerical Results]{Numerical Results}

    Corollary \ref{minimum distance bound} gives a theoretical bound for the minimum distance of QA codes. As we have already observed, the class of SQA codes forms a subfamily of the class of QC and this last class contains many optimal codes. In this part, we present some numerical results to see how good the bound obtained is and to show the existence of some optimal SQA codes, different from those found by Jitman and Ling in \cite{js}. In the first three examples, particularly, we show some examples of optimal SQA codes of index $3$ and $4$, while in the last two examples we present certain SQA codes of rate close to $1/2$.

    We develop an algorithm in {\sc Magma} to compute, given a finite group $H$, all $H$-QA codes of a fixed index $\ell$ with a minimum distance bounded below by certain constant $d$. The algorithm follows the following steps:
    \begin{itemize}
        \item compute all $q$-cyclotomic classes $S_{q}(h)$ in $\mathbb{F}_{q}[H]$, whose cardinalities give the degrees of the extension fields;
        \item for each extension $\mathbb{F}_{q^k}$, where $k=|S_{q}(h)|,$ compute all linear codes of length $\ell$ over $\mathbb{F}_{q^k}$;
        \item for each element $h$ in a $q$-cyclotomic class $S_{q}(h)$
        and for each linear code $\mathcal{C}$ of length $\ell$ over $\mathbb{F}_{q^k}$, by using DFT, which is written explicitly in \ref{concatenation map-2} and chosen as a concatenation map, we compute the concatenation of $\mathcal{C}$ with the minimal abelian code given by $h$. We put the obtained codes in a list $S$ if their minimum distance is greater than or equal to $d$;
        \item we sum pairs of elements of $S$, always checking if their minimum distance is greater than or equal to $d$, and we put the obtained codes in a list $S'$. We repeat the process to obtain codes with a higher dimension.
    \end{itemize}

    The complexity of this algorithm strongly depends on $\ell$, on the size of the $q$-cyclotomic classes and on the chosen $d$.

    We applied this algorithm in some cases and obtained some optimal codes. As an example, we first give here three cases, of index $2$, $3$ and $4$ respectively, for which the codes obtained have the best-known minimum distance for their dimension. Moreover, in the second and the third, the minimum distance obtained meets the lower bound given by Corollary \ref{minimum distance bound}.

    %T THINK IT IS BETTER TO CHANGE, IN THE WHOLE PAPER, $\mathcal{C}$ WITH $\mathcal{C}$ TO DENOTE THE CODES... TO AVOID CONFUSION WITH THE CYCLIC GROUPS...

    We looked for QA codes in $R:=\mathbb{F}_2[C_5\times C_5]^{2}$ of minimum distance at least $18$ and, for the dimension $12$, we got only one $[50,12,18]$ code $\mathcal{C}$, up to equivalence, which is the direct sum
    $$\mathcal{C}=(Re_{(1,0)}
    \Box \mathcal{C}_1)\oplus
    (Re_{(1,1)}\Box \mathcal{C}_1)\oplus
    (Re_{(2,4)}\Box \mathcal{C}_2)$$
    with $\mathcal{C}_1$ of generator matrix $G_1:=[1,\alpha^7]$ and $\mathcal{C}_2$ of generator matrix $G_2:=[1,\alpha^{12}]$, where $\alpha$ is a primitive element in $\mathbb{F}_{16}$ such that $\alpha^4=\alpha+1$.\\

    Note that $18$ is the best known minimum distance for a code of length $50$ and dimension $12$, according to Grassl's tables \cite{ct}. Moreover, in this case the lower bound given by Corollary \ref{minimum distance bound} is $12$.

    Secondly, we looked for QA codes in $R:=\mathbb{F}_2[C_3\times C_3]^{3}$ of minimum distance at least $12$ and, for the dimension $6$, we got only one $[27,6,12]$ code $\mathcal{C}$, up to equivalence, with two outer codes, namely
    $$\mathcal{C}=(Re_{(2,2)}
    \Box \mathcal{C}_1)\oplus
    (Re_{(1,0)}\Box \mathcal{C}_2)$$
    with $\mathcal{C}_1$ of generator matrix $G_1:=\left[\begin{smallmatrix}1&0&1\\0&1&\alpha\end{smallmatrix}\right]$ and $\mathcal{C}_2$ of generator matrix $G_2:=[1,\alpha,1]$, where $\alpha$ is a primitive element in $\mathbb{F}_{4}$.\\
    Note that $12$ is the best known minimum distance for a code of length $27$ and dimension $6$, by the Griesmer bound \cite{ct}. Moreover, in this case the lower bound given by Corollary \ref{minimum distance bound} is exactly $12$.

    Now, we looked for QA codes in $R:=\mathbb{F}_2[C_3\times C_3]^{4}$ of minimum distance at least $16$ and, for the dimension $6$, we got only one $[36,6,16]$ code $\mathcal{C}$, up to equivalence, with two outer codes, namely
    $$\mathcal{C}=(Re_{(2,2)}
    \Box \mathcal{C}_1)\oplus
    (Re_{(1,0)}\Box \mathcal{C}_2)$$
    with $\mathcal{C}_1$ of generator matrix $G_1:=\left[\begin{smallmatrix}1&0&\alpha^2&\alpha\\0&1&1&\alpha\end{smallmatrix}\right]$ and $\mathcal{C}_2$ of generator matrix $G_2:=[1,\alpha,\alpha,\alpha]$, where $\alpha$ is a primitive element in $\mathbb{F}_{4}$.\\
    Note that $16$ is the best known minimum distance for a code of length $36$ and dimension $6$, by the Griesmer bound \cite{ct}. Moreover, in this case the lower bound given by Corollary \ref{minimum distance bound} is exactly $16$.

        In the last two examples, in order to show the effectiveness the concatenation method for QA codes, we consider a binary QA in $\mathbb{F}_{2}[C_{5}\times C_{5}]^{256}$ of minimum distance at least $48$ and of rate $\approx 1/2$, with the concatenated structure

        $$\mathcal{C}= \bigoplus_{i=1}^{4}(Re_{h_i}
        \Box \mathcal{C}_i),$$

        where all $\mathcal{C}_{i}$ 's are Reed-Muller codes $\mathcal{RM}_{\mathbb{F}_{16}}(20, 2)$ of parameters $[256, 201, 12]$ over $\mathbb{F}_{16}$ (\cite{ak}), and $h_i$'s are $(1,0),(0,1), (1,1),$ and $(1,2)$ in $H= C_5\times C_5$, respectively. In fact, since each inner code $R_{h_i}$, for the corresponting $h_i$, has parameters $[25,4,10]$ over binary field, and since their sum has minimum distance $4$, the QA code with the given concatenated structure is of parameters $[6400, 3216, \geq 48]$.

    Finally, we consider a QA code in $R:= \mathbb{F}_{3}[C_5\times C_5]^{6561}$ of minimum distance at least $220$ and of rate close to $1/2$, with the concatenated structure

        $$\mathcal{C}=\bigoplus_{i=1}^{4}(Re_{h_i}
    \Box \mathcal{C}_i),$$
    where all $\mathcal{C}_{i}$'s are Reed-Muller codes $\mathcal{RM}_{81}(106, 2)$ of parameters $[6561,5076, 55]$ over $\mathbb{F}_{3^4}$ (\cite{ak}), and $h_i$'s are $(1,0),(1,1),(0,1),$ and $(1,2)$ in $H=C_5\times C_5$, respectively. In fact, using the same arguments as before, we obtain that the  QA code with the given concatenated structure has parameters $[164025, 81216, \geq 220]$.

    \subsection[Asymptotic Results]{Asymptotic Results}

    The class of binary self-dual doubly even strictly QA codes has been shown to be asymptotically good (\cite[Theorem 7.2]{js}), followed by the asymptotical goodness of binary complementary dual QA (QA LCD) codes of index $3$ (\cite{jpc}). Recall that, by a strictly QA code, the authors mean a QA code which is not QC. Note that $H$ being a noncyclic abelian group is enough for this purpose (cf. Remark \ref{special cases}). We first show that strictly QA codes are asymptotically good over any finite field $\F_q$.

    \begin{thm}\label{asymp}
        For any prime power $q$, the class of strictly QA codes over $\F_q$ is asymptotically good.
    \end{thm}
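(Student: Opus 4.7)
The strategy is to mimic the classical Gilbert--Varshamov construction of asymptotically good quasi-cyclic codes, with Theorem~\ref{concatenated decomposition}(ii) playing the role of the CRT glue. Concretely, I would fix a noncyclic abelian group $H$ whose order is coprime to $q$ (e.g.\ $H = C_p \times C_p$ for any prime $p \neq \mathrm{char}(\F_q)$) and let the index $\ell$ grow to infinity while using the Gilbert--Varshamov bound to choose good outer codes over each of the constituent fields $\E_i$.

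Write $R = \F_q[H]$, with primitive idempotents $e_{h_1},\ldots,e_{h_t}$ and constituent fields $\E_i = Re_{h_i}$ of $\F_q$-degrees $k_i = |S_q(h_i)|$ satisfying $\sum_i k_i = |H|$. For each $i$, the classical Gilbert--Varshamov theorem over the alphabet $\E_i$ produces constants $\rho_i > 0$ and $\delta_i > 0$ (depending only on $q$ and $k_i$) such that for every sufficiently large $\ell$ there is an $\E_i$-linear code $\mathfrak{C}_{i,\ell} \subseteq \E_i^\ell$ of rate at least $\rho_i$ and relative distance at least $\delta_i$. Setting $\rho_* := \min_i \rho_i > 0$ and $\delta_* := \min_i \delta_i > 0$, I would form
\[
\mathcal{C}_\ell \;:=\; \bigoplus_{i=1}^t R e_{h_i} \,\Box\, \mathfrak{C}_{i,\ell},
\]
which by Theorem~\ref{concatenated decomposition}(ii) is an $H$-QA, hence strictly QA, code of length $|H|\ell$. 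Its $\F_q$-dimension equals $\sum_i k_i \dim_{\E_i}\mathfrak{C}_{i,\ell} \geq \rho_* \ell \sum_i k_i = \rho_* \ell |H|$, so the rate of $\mathcal{C}_\ell$ is at least $\rho_*$. For the relative distance, Corollary~\ref{minimum distance bound} gives
\[
d(\mathcal{C}_\ell) \;\geq\; \min_v \, d(\mathfrak{C}_{i_v,\ell}) \cdot d\bigl(R e_{h_{i_1}} \oplus \cdots \oplus R e_{h_{i_v}}\bigr) \;\geq\; \delta_* \ell,
\]
since each partial sum of minimal abelian codes is a nonzero $\F_q$-linear code and hence has minimum distance at least $1$; thus the relative distance of $\mathcal{C}_\ell$ is at least $\delta_*/|H| > 0$, and the family $(\mathcal{C}_\ell)_{\ell}$ is asymptotically good.

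The main obstacle I anticipate is essentially administrative: one must check that the Gilbert--Varshamov step provides outer codes at a common length $\ell$ and with simultaneously positive lower bounds on rate and relative distance across all the (finitely many) constituent fields $\E_i$. This is standard, but it is the one spot where care is needed. Once this is granted, the algebraic identity $\sum_i k_i = |H|$ ensures that the weighted rate really is bounded below by the uniform constant $\rho_*$, and everything else follows from Theorem~\ref{concatenated decomposition}(ii) and Corollary~\ref{minimum distance bound}.
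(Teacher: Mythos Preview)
Your argument is correct, but it is more elaborate than the paper's. The paper fixes the same $H=C_p\times C_p$ but uses only the single constituent field $\E_0\cong\F_q$ attached to the idempotent $e_0$ (coming from the trivial cyclotomic class $\{0\}$). It then takes any asymptotically good family $(\sF_i)$ of $\F_q$-linear codes and sets $\sE_i:=\F_q[H]e_0\,\Box\,\sF_i$, leaving all other constituents zero. Since this is a \emph{simple} concatenation, the parameters are immediate: length $p^2n_i$, dimension $k_i$, distance at least $d\cdot d_i$ with $d:=d(\F_q[H]e_0)$ fixed; no appeal to Corollary~\ref{minimum distance bound} beyond the one-outer-code case is needed, and no coordination of Gilbert--Varshamov codes across several alphabets is required.

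Your route---filling in \emph{all} constituents with GV codes---buys a higher rate (bounded below by $\rho_*$ rather than $1/p^2$), at the cost of invoking GV over each $\E_i$ and then using the full Corollary~\ref{minimum distance bound}, only to throw away the inner-code contribution by bounding it below by $1$. For the bare statement of asymptotic goodness the paper's shortcut is cleaner; your version would be preferable if one cared about the achievable rate--distance tradeoff. Note also that your all-constituents construction does not immediately carry over to Theorem~\ref{lcdasymp}, where the paper exploits precisely the fact that the remaining constituents are $\{0\}$ (hence trivially LCD); in that setting the single-constituent trick is the right one.
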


    \begin{proof}
        Let $p$ be a prime different than $\mbox{char}(\F_q)$ and set $H=C_p \times C_p$ so that it is not cyclic and $\gcd(|H|,q)=1$. Note that the $q$-cyclotomic class of $0\in H$ consists of itself only. Let us denote the primitive idempotent corresponding to this cyclotomic class by $e_0$. Hence in the decomposition (\ref{decompR}) of $\F_q[H]$, there exists the field $\F_q$, which is isomorphic to the ideal $\F_q[H]e_0$. This implies that an $H$-QA code over $\F_q$ of any index $\ell$ has a constituent which lies in $\F_q^\ell$.

        Let $\sF:=(\sF_1, \sF_2, \ldots )$ be an asymptotically good family of $\F_q$-linear codes and let the parameters of any member $\sF_i$ in the family be $(n_i,k_i,d_i)$. Define the groups $$G_i:=H\times C_{n_i},$$
        for all $i\geq 1$. We can construct $H$-QA codes $\sE_i$ in $\F_q[G_i]$ (or, in $\F_q[H]^{n_i}$) for all $i$ using Theorem \ref{concatenated decomposition} as follows:
        $$\sE_i:=\F_q[H]e_0 \Box \sF_i.$$
        Note that any member $\sE_i$ of the family $\sE:=(\sE_1,\sE_2,\ldots)$ of $H$-QA codes has parameters $(p^2n_i,k_i, \geq d d_i)$, where $d$ is the minimum distance of the fixed abelian code $\F_q[H]e_0$ of length $p^2$. Hence, the relative parameters of $\sE_i$'s also have positive limit, namely $1/p^2$ multiple of the limit relative rate of $\sF$ and at least $d/p^2$ multiple of the limit relative distance of $\sF$.
    \end{proof}

    We can extend the preceding asymptotic conclusion to the linear complementary dual (LCD) class over any finite field. Let us note that the decomposition of the dual of a QA code is given in \cite{js}. Based on this, a characterization of self-dual QA codes is obtained in terms of the constituents of the code (\cite[Corollary 4.1]{js}). The analogous result for QA LCD codes can be obtained in a straightforward way, so we do not prove it. One can consult \cite[Theorem 3.1]{gos} for the special case of LCD QC codes.

    \begin{thm}\label{lcdasymp}
        For any prime power $q$, the class of strictly QA LCD codes over $\F_q$ is asymptotically good.
    \end{thm}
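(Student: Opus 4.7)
The plan is to mimic the construction in the proof of Theorem \ref{asymp}, replacing the generic asymptotically good starting family by an asymptotically good family of \emph{LCD} codes over $\F_q$. Such a family $\sF=(\sF_1,\sF_2,\ldots)$ exists over any finite field $\F_q$ (random $\F_q$-linear LCD codes are known to meet the Gilbert--Varshamov bound); write the parameters of $\sF_i$ as $(n_i,k_i,d_i)$.

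Fix a prime $p\neq\mathrm{char}(\F_q)$ and set $H=C_p\times C_p$, so that $H$ is noncyclic and $\gcd(|H|,q)=1$. As in the proof of Theorem \ref{asymp}, the $q$-cyclotomic class of $0\in H$ is $\{0\}$, and the corresponding primitive idempotent $e_0=\tfrac{1}{|H|}\sum_{h\in H}Y^h$ produces a one-dimensional minimal abelian component $\F_q[H]e_0\cong\F_q$ in (\ref{decompR}). Define
$$\sE_i := \F_q[H]e_0\,\Box\,\sF_i \subseteq \F_q[H]^{n_i}.$$
By Theorem \ref{concatenated decomposition}(ii), $\sE_i$ is a strictly $H$-QA code of index $n_i$, and the parameter computation in Theorem \ref{asymp} gives $(p^2 n_i,\,k_i,\,\geq p^2 d_i)$, using that $\F_q[H]e_0$ is a repetition code of length $p^2$. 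These parameters already yield the asymptotic goodness of the family $\sE=(\sE_1,\sE_2,\ldots)$.

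The remaining point is to verify that each $\sE_i$ is LCD. I would first invoke the constituent characterization of QA LCD codes (the straightforward analog of \cite[Corollary 4.1]{js} mentioned in the paragraph preceding the statement): since $\{0\}$ is a self-conjugate cyclotomic class with associated field $\E_0=\F_q$, and all other constituents of $\sE_i$ vanish, the characterization reduces the LCD condition on $\sE_i$ to the requirement that the constituent at $0$, which is precisely $\sF_i$, be Euclidean LCD; this holds by the choice of $\sF$. A direct verification is equally short and avoids any subtlety about which bilinear form is used at each constituent: an element of $\sE_i$ has the form $v=\bigl(f_1 e_0,\ldots,f_{n_i}e_0\bigr)$ with $(f_1,\ldots,f_{n_i})\in\sF_i$, and unrolling coordinates in $\F_q^{p^2 n_i}$ shows that $v\in\sE_i^\perp$ forces the collapsed vector $(f_1,\ldots,f_{n_i})$ to lie in $\sF_i^\perp$; since $\sF_i\cap\sF_i^\perp=\{0\}$, this gives $\sE_i\cap\sE_i^\perp=\{0\}$.

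The main conceptual obstacle --- and the only step that requires care --- is ensuring that the LCD property of the inner-code/constituent structure really transports to the concatenation $\F_q[H]e_0\,\Box\,\sF_i$ without introducing a twist (Hermitian versus Euclidean); this is exactly why the self-conjugacy of $\{0\}$ and the triviality $|S_q(0)|=1$ are used. Since the direct intersection argument of the previous paragraph bypasses the constituent characterization altogether, the proof goes through cleanly over any $\F_q$.
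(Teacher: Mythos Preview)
Your proposal is correct and follows essentially the same approach as the paper: fix $H=C_p\times C_p$, take an asymptotically good family of LCD codes $\sF_i$ as the single nonzero constituent at $0$, and deduce that each $\sE_i=\F_q[H]e_0\,\Box\,\sF_i$ is strictly QA, asymptotically good, and LCD via the constituent characterization. Your additional direct intersection argument and your explicit identification of $\F_q[H]e_0$ as the length-$p^2$ repetition code (hence $d=p^2$) are welcome extra details, but not a different route.
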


    \begin{proof}
        Let $H$ be as in the proof of Theorem \ref{asymp} and choose $\sF$ to be an asymptotically good family of LCD codes this time. Such codes exist by \cite{mass} and \cite{sendr}. Consider the family $\sE$ of strictly QA codes as in the same proof. The fact that this family is asymptotically good follows as above. For any $i\geq 1$, the code $\sE_i$ has unique nonzero constituent (namely, $\sF_i$) which is LCD by construction. All other constituents of $\sE_i$ are $\{0\}$, which is trivially LCD with respect to the Euclidean inner product. Hence, by the dual QA code description \cite[p. 519]{js}, and the discussion preceding this theorem, each $\sE_i$ is LCD. Therefore $\sE$ is an asymptotically good family of strictly QA LCD codes.
    \end{proof}

    We note that the codes presented in Theorems \ref{asymp} and \ref{lcdasymp} resemble the asymptotically good QC codes presented in \cite{LS1}, since the ``co-index'' (i.e. length/index) of each code in the families considered is fixed (unlike the asymptotically good family presented in \cite{jpc,js}).

    \medskip

    \section{Acknowledgments} \label{ack}
    The first author was partially supported by PEPS - Jeunes Chercheur-e-s - 2018.
    The second author was supported by T\"{U}B\.{I}TAK Project no. 114F432. The third author was supported by T\"{U}B\.{I}TAK 2214/A Fellowship program.

    % -----------------------------------------------------------
    %\bibliographystyle{amsplain}
    %\bibliography{xbib}

\begin{thebibliography}{99}
        \bibitem{ak} E. F. Assmuss, and J. D. Key, \emph{Designs and their Codes}, No. 103, Cambridge University Press, 1992.
        \bibitem{bz} E.L.~Blokh and V.V.~Zyablov, \emph{Coding of generalized concatenated codes}, Problemy Peredachi Informatsii, 10 (1974),  no. 3, pp. 45-50.
        \bibitem{borello} M.~Borello, \emph{On the automorphism groups of binary linear codes}, Topics in Finite Fields, Contemporary Mathematics 632, 2015.
        \bibitem{Magma} W.~Bosma, J.~Cannon, and C.~Playoust, \emph{The Magma algebra system I: The user language}, J. Symbolic Comput., 24 (1997), pp. 235--265.
        %\bibitem{cao} Y.~Cao, J.~Gao, and F.~Wei~Fu, \emph{Semisimple multivariable $\mathbb{F}_q$-linear codes over $\mathbb{F}_{q^{\ell}}$}, Des. Codes Cryptogr., 77 (2015), pp. 153-177.
        \bibitem{carlet} C.~Carlet and S.~Guilley, \emph{Complementary dual codes for counter-measures to side-channel attacks}, Adv. Math. Commun., 10 (2016), pp. 131-150.
        \bibitem{dkl} C.~Ding, D.R.~Kohel, and S.~Ling, \emph{Split group codes}, IEEE Trans. Inform. Theory, 46 (2000), pp. 485-495.
        \bibitem{d} I.~Dumer, \emph{Concatenated codes and their multilevel generalizations}, Handbook of Coding Theory, North-Holland, Amsterdam, 1911-1988, 1998.
        \bibitem{ct} M.~Grassl, \emph{Bounds on the minimum distance of linear codes and quantum codes}, Online available at \url{http://www.codetables.de}, accessed on 2018-06-18.
        %    \bibitem{f} Forney, G. David, Jr., \emph{Concatenated codes}, M.I.T. Research Monograph, No. 37. The M.I.T. Press, Cambridge, Mass., 1966. xi+147 pp
        \bibitem{go} C.~G\"uneri and F.~\"Ozbudak, \emph{The concatenated structure of quasi-cyclic codes and an improvement of Jensen's
            bound}, IEEE Trans. Inform. Theory, 59 (2013), 979-985.
        \bibitem{gozs}  C.~G\"uneri, B.~\"{O}zdemir, and P.~Sol\'e, \emph{On additive cyclic structure of quasi-cyclic codes}, Discrete Math, 341 (2018), no. 10, pp. 2735-2741.
        \bibitem{goz} C.~G\"uneri and B.~\"{O}zkaya, \emph{Multidimensional quasi-cyclic and convolutional codes}, IEEE Trans. Inform. Theory, 62 (2016), pp. 6772-6785.
        \bibitem{gos} C.~G\"uneri, B.~\"{O}zkaya, and P.~Sol\'e, \emph{Quasi-cyclic complementary dual codes}, Finite Fields Appl., 42 (2016), pp. 67-80.
        \bibitem{GHM} T.A.~Gulliver, M.~Harada, and H.~Miyabayashi, \emph{Double circulant and quasi-twisted self-dual codes over and $\F_5$ and $\F_7$}, Adv. Math. Commun., 1 (2007), pp. 223-238.
        \bibitem{HKLL} S.~Han, J.L.~Kim, H.~Lee, and Y.~Lee, \emph{Construction of quasi-cyclic self-dual codes}, Finite Fields Appl., 18 (2012), pp. 613-633.
        \bibitem{j} J.M.~Jensen, \emph{The concatenated structure of cyclic and abelian codes}, IEEE Trans. Inform. Theory, 31 (1985),
        no. 6, pp. 788-793.
        \bibitem{jpc} S. Jitman, H.S. Palines, R.B. dela Cruz, \emph{On Quasi-Abelian Complementary Dual Codes}, Barbero \'A., Skachek V., Ytrehus \O. (eds) Coding Theory and Applications, ICMCTA, Lecture Notes in Computer Science, vol 10495. Springer, Cham,  2017.
        \bibitem{js} S.~Jitman, and S.~Ling, \emph{Quasi-abelian codes}, Des. Codes Cryptogr., 74 (2015), pp. 511-531.
        \bibitem{ls} S.~Ling and P.~Sol\'e, \emph{On the algebraic structure of quasi-cyclic codes I: Finite Fields}, IEEE Trans. Inform.
        Theory, 47 (2001), pp. 2751-2760.
        \bibitem{LS1} S.~Ling and P.~Sol\'e, \emph{Good self-dual quasi-cyclic codes exist}, IEEE Trans. Inform. Theory, 49 (2003), pp. 1052-1053.
        %        \bibitem {mst} MacWilliams, F.J., Sloane, N.J.A., Thompson, J.G., \emph{Good self dual codes exist},  Discrete Math., vol. 3, 153-162, 1972.
        \bibitem{sloane} F.~J.~MacWilliams and N.~J.~A.~Sloane,
        \newblock \emph{The Theory of Error-Correcting Codes},
        \newblock I. North-Holland Publishing Co., Amsterdam-New York-Oxford. North-Holland Mathematical Library, Vol. 16, 1977.
        %\bibitem{martinez} E.~Martinez-Moro, A.~Pi\~nera-Nicol\~as, and F.~R\'ua, \emph{Additive semisimple multivariable codes over $\mathbb{F}_4$}, Des. Codes Cryptogr., 69 (2013), pp. 161–180.
        %\bibitem{martinez2}  E.~Martinez-Moro, A.~Pi\~nera-Nicol\~as, and F.~R\'ua, \emph{On additive modular bivariate codes over $\mathbb{F}_4$}, Finite Fields Appl., 28 (2014), pp. 199-213.
        %\bibitem{martinez3} E.~Martinez-Moro, A.~Pi\~nera-Nicol\~as, and F.~R\'ua, \emph{Multivariable codes in principal ideal polynomial quotient rings with applications to additive modular bivariate codes over $\mathbb{F}_4$}, J. Pure Appl. Algebra, 222 (2018), pp. 359-367.
        \bibitem {mass} J.L.~Massey, \emph{Linear codes with complementary duals}. Discrete Math., 106-107 (1992), pp. 337-342.
        %        \bibitem {pp} Pless, V., Pierce, J.N., \emph{Self-dual codes over $GF(q)$ satisfy a modified Varshamov-Gilbert bound},  Inform. and Control, vol. 23, 35-40, 1973.
        %\bibitem{nebe} Nebe G., \emph{An extremal $[72,36,16]$ binary code has no automorphism group containing $Z2xZ4$, $Q8$, or $Z10$. }, Finite Fields and Their Applications, vol. 18, 563-566, 2012.
        %\bibitem{nf} Nebe G. and Feulner T., \emph{The automorphism group of an extremal [72,36,16] code does not contain Z7, Z3xZ3, or D10}, IEEE Transactions on Information Theory, vol. 58 (11), 6916-6924, 2012.
        \bibitem {sendr} N.~Sendrier. \emph{Linear codes with complementary dual meet the Gilbert-Varshamov bound},  Discrete Math.,285 (2004), pp. 345-347.
        \bibitem{sws} M.~Shi, R.~Wu, and P.~Sol\'e, \emph{Long cyclic codes are good}, arXiv:1709.09865v3.
        \bibitem{wasan} S.K.~Wasan, \emph{Quasi abelian codes}, Publ. Inst. Math., 35 (1977), pp. 201–206.
        \bibitem{W} E.J.~Weldon, \emph{Long quasi-cyclic codes are good}, IEEE Trans. Inform. Theory, 13 (1970), pp.130.


    \end{thebibliography}

\end{document}